\documentclass[letterpaper,journal]{IEEEtran}
\usepackage{cite}
\usepackage{amsmath,amssymb,amsfonts}
\usepackage{booktabs}
\newtheorem{definition}{Definition}
\newtheorem{proposition}{Proposition}
\newtheorem{remark}{Remark}
\usepackage{algorithm}
\usepackage{algorithmic}

\usepackage{graphicx}
\usepackage[caption=false,font=footnotesize]{subfig}
\usepackage{textcomp}
\usepackage{url}
\usepackage{multirow}
\usepackage{xcolor}
\usepackage{hyperref}

\begin{document}

\title{SEM-RAG: Structure-Preserving Multimodal Graph Compilation and Entropy-Guided Retrieval for Telecommunication Standards}

\author{Yuzhi~Yang, Lina~Bariah, Yuhuan~Lu, Hang~Zou, and~M\'erouane~Debbah
\thanks{Y. Yang, L. Bariah, Y. Lu, H. Zou, and M. Debbah are with Research Institute for Digital Future, Khalifa University, 127788 Abu Dhabi, UAE (e-mails: \{firstname.lastname\}@ku.ac.ae).}
}

\markboth{}{Yang \MakeLowercase{\textit{et al.}}: SEM-RAG}

\maketitle

\begin{abstract}
Telecommunication standards pose a unique challenge for retrieval systems, where accuracy depends on semantic relevance as well as on preserving the structural logic embedded in the documents, including structured relationships embedded in tables, conditions, and formulas. When these elements are flattened into text, critical dependencies are lost, leading to unreliable retrieval. In this paper, we present SEM-RAG, an end-to-end retrieval framework built around two design choices. First, a layout-aware compiler converts text, tables, and formulas into typed graph primitives. Each table cell is linked to its row headers, column headers, predicates, and source coordinates, while each formula is converted into an operator graph tied to nearby symbol definitions. Second, the compiled graph is compressed with Structural Entropy Minimization (SEM), which avoids LLM-based bottom-up clustering during indexing. A Jensen-Shannon alignment layer and a lightweight query controller serve as supporting retrieval components that map user queries to the right subgraphs, while keeping online cost stable. Experiments on TeleQnA, TSpec-LLM, SPEC5G, and ORAN-Bench-13K show that SEM-RAG improves performance on table-heavy and formula-heavy questions, reaches 94.1\% accuracy on TeleQnA and 93.8\% on ORAN-Bench-13K, and cuts indexing-time token usage by a wide margin relative to standard GraphRAG. These results indicate that structure-preserving compilation is a practical requirement for retrieval over telecom specifications, not merely an optional preprocessing step.
\end{abstract}

\begin{IEEEkeywords}
Retrieval-Augmented Generation, Document Parsing, Knowledge Graphs, Telecommunications, Structural Entropy.
\end{IEEEkeywords}

\section{Introduction}
\label{sec:introduction}
The rollout of 5G-Advanced and the early design of 6G systems are driving telecom operations toward increasingly automated and data-intensive workflows \cite{bariah2024next, zhou2024large, zou2026nree}. Yet, in practice, network behavior remains governed by extensive standards documents maintained by 3GPP and the O-RAN Alliance \cite{nikbakht2024tspec, gajjar2024oranbench}. These documents are fundamentally different from conventional text. They interleave hierarchical clauses with dense parameter tables, option matrices, notation-heavy formulas, and tightly cross-referenced definitions. As a result, much of their operational meaning is encoded not in text alone, but in layout, conditional structure, and local dependencies.

This creates a difficult setting for standalone large language models (LLMs). Even strong general-purpose models struggle to reproduce exact configuration thresholds, release-specific option flags, or clause-level exceptions from parametric memory alone. Results on benchmarks such as TeleQnA show that unsupported answers remain common even for advanced models \cite{maatouk2023teleqna}, and similar observations have been reported by recent telecom-oriented evaluations \cite{gsma2025open}. In this domain, a minor factual error can lead to a wrong capability decision, a misconfigured slice, or an avoidable service failure.

\begin{figure*}[t]
    \centering
    \includegraphics[width=0.85\linewidth]{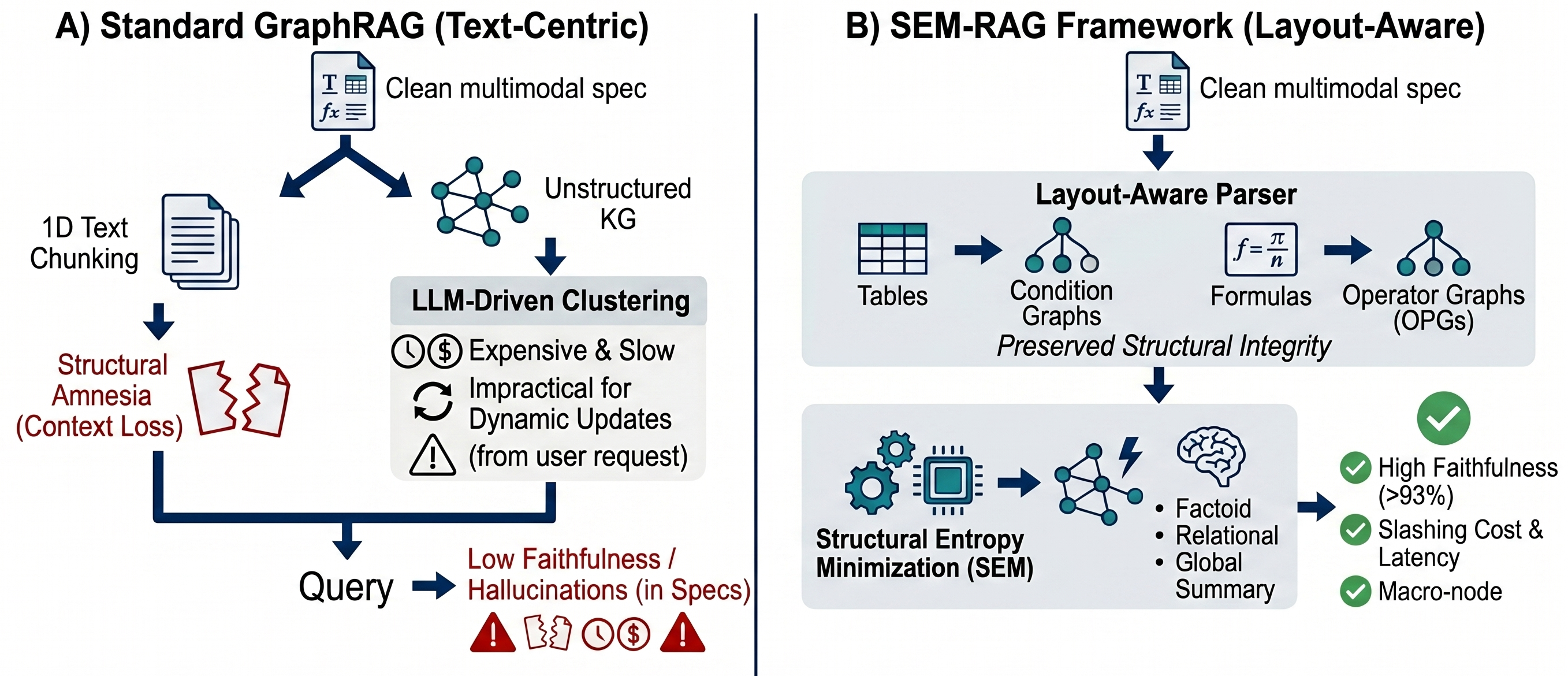}
    \caption{Conceptual comparison between a standard text-centric GraphRAG pipeline and SEM-RAG. In the left panel, standard GraphRAG starts from one-dimensional text chunking, loses structural relations in tables and formulas, and then relies on LLM-driven clustering over an unstructured knowledge graph. In the right panel, SEM-RAG first applies a layout-aware parser that converts tables into condition graphs and formulas into operator graphs, then uses Structural Entropy Minimization (SEM) to build a compact hierarchy that supports factoid, relational, and global-summary retrieval.}
    \label{fig:intro}
\end{figure*}

For this reason, telecom assistants increasingly rely on retrieval-augmented generation (RAG) to ground responses in the underlying specifications \cite{gao2023retrieval}. Yet the standard RAG pipeline still inherits a simplifying assumption from open-domain question answering, where documents are reduced to one-dimensional spans and then ranked primarily by semantic similarity. This approximation is often acceptable for descriptive paragraphs, but it is much less adequate for standards whose meaning depends on row–column bindings, applicability notes, and symbolic computation. Once a modulation table, service bitmap, or physical-layer equation is flattened into plain text, the link between headers, cells, conditions, symbols, and numerical values is either weakened or entirely lost. We refer to this failure mode as \emph{structural amnesia}, where the system keeps the words but not the structural relations that make those words operational.

GraphRAG methods partly address this problem by introducing explicit entities and relations \cite{edge2024global, pan2024unifying}. They are especially useful when a question requires evidence drawn from several connected clauses rather than from a single passage. Nevertheless, current graph-based pipelines are limited in two ways that matter in telecom standards. First, many still begin with text-first parsing, so the graph is constructed only after the original table and formula structure has already been discarded. Second, the hierarchy is often produced through repeated LLM calls for community discovery or bottom-up summarization \cite{edge2024global}. As illustrated in the left panel of Fig.~\ref{fig:intro}, this design creates a double bottleneck, where the initial graph already reflects context loss from one-dimensional chunking, and the subsequent LLM-driven clustering is costly, slow, and difficult to update when the corpus changes.

These observations reveal a gap that is specific to telecom standards. Efficient text-centric pipelines are easy to deploy, but they break the executable structure of tables and formulas at ingestion time. Graph-based pipelines improve relational retrieval, but they often start from already flattened content and then add an expensive LLM-driven hierarchy on top. Table~\ref{tab:positioning} summarizes this gap and clarifies where SEM-RAG is positioned. What remains missing is a framework that preserves structural relations from the start and still supports practical indexing and updates.

\begin{table*}[t]
\caption{Positioning of SEM-RAG against representative retrieval paradigms for telecom standards.}
\label{tab:positioning}
\centering
\small
\begin{tabular}{@{}p{3.0cm}p{4.2cm}p{7.3cm}@{}}
\toprule
\textbf{Method family} & \textbf{Main strength} & \textbf{Main limitation in telecom standards} \\
\midrule
Naive RAG / passage retrieval & Simple pipeline and low indexing overhead & Flattens tables and formulas into text chunks, so row-to-cell dependencies and operator structure are not preserved \\
Hierarchical text RAG & Improves long-context organization over flat chunking & Still builds hierarchy over textual summaries rather than over executable table or formula structure \\
GraphRAG & Supports relational retrieval and multi-hop traversal & Often inherits text-first parsing and relies on costly LLM-based clustering to build the hierarchy \\
SEM-RAG & Preserves table and formula structure during ingestion and builds the hierarchy with SEM instead of recursive LLM summarization & Requires a richer parser and typed graph construction during indexing \\
\bottomrule
\end{tabular}
\end{table*}

In telecom standards, a key limitation of existing retrieval pipelines lies in how documents are ingested rather than how they are queried. When tables and formulas are flattened into text, row–column bindings, conditional logic, and operator dependencies are no longer explicitly represented. As a result, later retrieval stages operate on incomplete structure and often fail to recover the correct operational constraints. SEM-RAG addresses this issue by preserving structure at the point of compilation. As shown in the right panel of Fig.~\ref{fig:intro}, the framework first applies a layout-aware parser that converts tables into \emph{condition graphs} and formulas into \emph{operator graphs}, while retaining provenance to the source clauses. The resulting multimodal graph is then compressed using Structural Entropy Minimization (SEM), which constructs a hierarchical index without relying on LLM-based clustering. Query-time retrieval operates directly over this compiled structure through factoid, relational, and global retrieval modes. A Jensen-Shannon alignment layer and a lightweight MLP router support this process, but they are included to make retrieval over the compiled graph practical rather than to serve as the central novelty of the paper.

This design has two immediate consequences. First, retrieval operates on graph primitives that preserve structural integrity instead of on flattened text fragments. Second, the indexing pipeline no longer depends on recursive LLM summarization of intermediate communities, which substantially reduces token cost and re-indexing latency. Hence, SEM-RAG is not introduced as a retrieval framework in which the parser and the graph index are co-designed for standards documents whose logic is encoded in layout.

The contributions of this work are fourfold:
\begin{enumerate}
\item We formulate a structure-preserving compiler for telecom standards that converts hierarchical text, relational tables, and mathematical expressions into typed graph primitives. In particular, the compiler produces hierarchical paragraph primitives \(\mathcal{T}_i\), conditional-logic table subgraphs \(\mathcal{C}_i\), and formula graph primitives \(\mathcal{F}_i\), while preserving provenance back to the source clauses. This addresses the representation gap left by text-only ingestion.
\item We introduce an entropy-guided indexing procedure that compresses the compiled graph without LLM-based bottom-up clustering. The resulting hierarchy is used as a retrieval index, while LLM calls are limited to a small number of top-level summaries. This addresses the indexing-cost gap of LLM-centered GraphRAG.
\item We incorporate a Jensen-Shannon divergence (JSD)-based alignment layer and a lightweight query controller as query-time support components so that natural-language questions can be matched to the appropriate compiled subgraphs without traversing the entire graph. These components improve queryability of the compiled index under practical latency constraints.
\item We provide a broad empirical study on telecom benchmarks, including table-heavy and formula-heavy slices, multi-hop reasoning tests, indexing-cost comparisons, and cross-LLM robustness analyses, and we state explicitly which dataset supports each diagnostic.
\end{enumerate}

\section{Related Work}
\label{sec:related_work}

\subsection{Large Language Models in Telecommunications}

Recent work on LLMs in telecommunications shows that raw language understanding is not enough for standards-intensive network operations. Cellular systems are governed by documents that are unusually dense in abbreviations, release-specific conditions, parameter tables, and normative constraints. As a result, even very capable foundation models often produce answers that sound plausible while missing a release dependency, misreading a threshold, or overlooking a table condition that changes the meaning of an otherwise familiar term \cite{zhou2024large, bariah2023globecom}. This mismatch is particularly pronounced in tasks tied to radio configuration, protocol compliance, security procedures, and O-RAN interoperability, where a small factual deviation can make the entire answer invalid.

Early studies in this direction mainly examined whether general-purpose or telecom-adapted LLMs could internalize standards knowledge through prompting or fine-tuning \cite{zou2025telecomgpt}. Benchmarks such as TeleQnA and ORAN-Bench-13K showed that these models can capture general conceptual knowledge, but they still struggle when the answer depends on exact wording, cross-section constraints, or specification artifacts that are difficult to memorize parametrically \cite{maatouk2023teleqna, gajjar2024oranbench}. This line of work clarifies both the opportunity and the limitation, showing that telecom knowledge is rich enough to benefit from LLM reasoning, yet sufficiently constrained that ungrounded generation is unreliable.

This observation led naturally to retrieval-based telecom assistants. Systems such as TelcoAI and Telco-RAG advanced from pure generation toward grounding on external corpora, typically by retrieving semantically similar chunks from specification text and passing them to the LLM as evidence \cite{conger2025telcoai, bornea2025telcorag}. These frameworks improved the accuracy compared with zero-shot generation, but they maintain a simplifying assumption inherited from general-domain RAG, namely that documents can be treated primarily as text. That assumption is significantly more limiting in telecom specifications than in general natural language text. Important evidence is frequently distributed across structured elements such as tables, enumerated conditions, footnotes, and equations, where meaning depends on layout and local structure rather than lexical similarity alone.

More recent work has begun to acknowledge that telecom QA requires more than better embeddings or larger context windows. Studies on retrieval design for technical corpora suggest that the primary failure mode is not missing vocabulary but missing structure, where the system retrieves related text yet fails to preserve the internal logic that makes the text operationally meaningful \cite{zhou2025llm}. Curriculum-style retrieval and modality-aware alignment strategies follow a similar direction, aiming to expose models to more complex structural evidence in a more organized manner \cite{sun2025curriculum, chu2025A}. Our work follows this broader shift, but targets a more specific bottleneck. For telecom standards, the bottleneck lies not only in semantic retrieval but in how multimodal specification content is compiled into representations that retain executable relationships. SEM-RAG is designed to address this requirement.

\subsection{Evolution of Graph-Enhanced Retrieval Architectures}

Graph-enhanced retrieval was introduced to address the limitations of passage-level dense retrieval. When evidence is distributed across entities, references, and long-range dependencies, a flat collection of chunks struggles to support multi-hop reasoning. GraphRAG-style systems handle this by representing documents as graphs of interconnected nodes and edges, allowing retrieval to move beyond local lexical similarity and toward explicit relational traversal \cite{pan2024unifying}. This is particularly relevant for technical domains, where many questions require tracing dependencies rather than extracting a single sentence.


Microsoft's GraphRAG system realized this paradigm by combining graph construction with hierarchical clustering and summary generation \cite{edge2024global}. Its key insight is that a graph can support both local evidence retrieval and higher-level corpus navigation. However, the standard implementation also exposes a systems limitation, as graph scale increases, community detection, recursive summarization, and repeated LLM calls become costly to construct and difficult to maintain. This trade-off is manageable for relatively static corpora, but it becomes restrictive in domains such as telecommunications, where specifications evolve across releases and practical deployments require frequent re-indexing as the knowledge base grows.


Subsequent work focuses on improving efficiency and traversal behavior in graph-based retrieval. Some approaches refine graph construction through concept-aware entity alignment and improved relation extraction, reducing spurious edges and improving traversal fidelity \cite{hu2026knowledge}. Others modify retrieval policies, where GraphFlow and ReGraphRAG adjust how the system traverses the graph or reconstructs task-relevant subgraphs at query time \cite{yu2025graphflow, kim2025regraphrag}. Lightweight variants such as LightRAG and PathRAG reduce overhead by limiting hierarchy depth, simplifying graph structure, or constraining traversal scope \cite{guo2024lightrag, chen2026pathrag}. In contrast, HyperGraphRAG and related extensions extend the graph representation itself, arguing that pairwise edges are insufficient to capture higher-order relational dependencies \cite{luo2025hypergraphrag}.

These studies show that graph-based retrieval is not a single design point but a spectrum of trade-offs among structural expressiveness, indexing cost, traversal depth, and runtime efficiency \cite{xiang2025when}. However, a key valid assumption is that the graph is typically built from text-centric extraction, and structural organization is often delegated to LLM-heavy procedures. For telecom specifications, this leads to two practical gaps. First, if the source representation has already lost table structure or equation semantics, downstream graph reasoning can only partially recover the missing dependencies. Second, if graph hierarchy depends heavily on repeated LLM clustering, maintaining and updating the index becomes computationally expensive in evolving corpora. SEM-RAG is aimed to maintain the graph-based view of retrieval, but shifts the emphasis from semantic graph decoration to structure-preserving graph compilation, and from LLM-based hierarchy construction to structural-entropy-driven compression.


\subsection{Layout-Aware Parsing and Multimodal Understanding}

A parallel line of research comes from Document AI, where the central concern is that document meaning is often inseparable from layout. In forms, manuals, scientific articles, and standards documents, two pieces of text with similar wording may play very different roles depending on whether they appear in a title, a row header, a footnote, or a formula. Layout-aware parsing methods were developed precisely to avoid destroying these details during preprocessing \cite{sourati2025ladrag}. Rather than extracting text as a linear stream, these methods use page geometry, reading order, block segmentation, and visual context to preserve the local structure of the source document.

Recent retrieval-oriented systems have extended these ideas into multimodal RAG. For example, SuperRAG and Infinity Parser demonstrate that preserving bounding boxes, regions, and hierarchical layout can materially improve document understanding when answers depend on visually organized evidence rather than ordinary prose \cite{yang2025superrag, wang2025infinity}. MAGMaR and related multimodal frameworks extend this idea further by attempting to align text, tables, charts, and other visual elements within a shared reasoning space \cite{drushchak2025multimodal}. These results highlight that the ingestion stage defines the information available to the retriever, rather than acting as a simple preprocessing step.

However, most layout-aware systems are still optimized for representation fidelity rather than executable semantics. This distinction is critical in technical standards. In telecom documents, the goal is not just to preserve where a table cell appears, but what logical condition that cell encodes. A matrix may encode dependencies between a service option, a capability flag, and a release-specific state. An equation may define an operational quantity whose interpretation depends on variable roles and operator ordering. If these objects are only serialized as text with coordinates attached, the LLM is forced to infer the operational logic during answer generation.

This is where recent work on mathematical structure becomes particularly relevant. Studies on formula understanding increasingly argue that equations should be parsed into structured computational objects rather than treated as opaque strings \cite{mansouri2025math}. Operator-level representations can separate variables, constants, and functional dependencies in a way that is much more useful for grounded reasoning. Related efforts on localized fragment augmentation similarly suggest that targeted structural decomposition can improve model behavior on technical content \cite{lvov2026enhancing}. SEM-RAG builds on this concept, but specializes it for telecom standards by treating tables and formulas as graph primitives to be compiled, not just preserved. In other words, our parser is not intended to be a better OCR layer, it is intended to produce a representation that supports later retrieval and reasoning in a form closer to the original operational logic of the specification.

The second technical component of this work comes from graph learning and network compression, where Structural Entropy (SE) provides an information-theoretic measure of graph organization. In the formulation of the authors in \cite{li2016structural}, structural entropy quantifies the expected coding cost of a random walk on a graph, thereby capturing the remaining uncertainty in its connectivity pattern. This perspective recasts graph partitioning as the problem of constructing a hierarchical organization that minimizes structural uncertainty, rather than as purely semantic clustering.

Later work has shown that SE is useful beyond its original formulation. In hierarchical graph analysis, minimizing structural entropy tends to expose dense local regularities while separating weaker cross-community connections \cite{cao2024hierarchical}. This makes it relevant to compression, community discovery, and efficient graph indexing. More recent models such as GSE-TLS, HypCSE, and T-Retriever demonstrate that entropy-guided grouping can produce competitive representations without relying on the heavy optimization or repeated language-model supervision that many contemporary retrieval systems use \cite{peng2025gse, zeng2026hyperbolic, wei2026tretriever}. The common lesson is that graph organization can often be improved substantially by exploiting topology itself, rather than repeatedly asking an LLM to explain what the graph means.

For our setting, the importance of SE is mainly methodological. Telecom standards give rise to large heterogeneous graphs with many local regularities, including repeated table schema, recurrent protocol entities, and formula fragments that connect tightly within a subtopic, while remaining loosely coupled to the broader corpus. A hierarchy induced through structural entropy can exploit these regularities in a deterministic way. This does not eliminate semantics from the system, semantic labeling is still useful at higher levels. But it changes where semantics are needed. Instead of using LLM calls to discover the hierarchy itself, SEM-RAG uses SE to build the hierarchy first and reserves language modeling for the smaller task of interpreting the resulting macro-structure.

This distinction is important because it sharpens the contribution boundary of the present work. Our aim is not to propose a new general-purpose SE theory, nor to claim that structural entropy alone solves retrieval. Rather, we use SE as a practical systems mechanism for constructing a scalable hierarchy over a parser-generated multimodal graph. In that sense, structural entropy is the compression engine that makes the architecture operational, while the parser remains the component that determines whether the graph faithfully reflects telecom document structure in the first place. This division of roles is central to the design of SEM-RAG and motivates the methodological choices in the next section.

\section{Problem Formulation and Design Rationale}
\label{sec:problem_formulation}

\subsection{Mathematical Representation of the Target Corpus}
We consider a corpus of telecom standards \(\mathcal{D} = \{D_1, D_2, \dots, D_N\}\), where each document $D_i$ contains three source modalities,
\begin{equation}
    D_i = \{X_i^{\text{text}}, X_i^{\text{tab}}, X_i^{\text{eq}}\},
\end{equation}
where \(X_i^{\text{text}}\) denotes the textual clauses, \(X_i^{\text{tab}}\) denotes the relational tables with row and column spans, and \(X_i^{\text{eq}}\) denotes the displayed mathematical expressions. Figures and sequence charts are retained as provenance metadata when present, but they are not the primary target of the compiler studied here.

The layout-aware compiler transforms \(D_i\) into three compiled components,
\begin{equation}
    \Phi(D_i) = \{\mathcal{T}_i, \mathcal{C}_i, \mathcal{F}_i\},
\end{equation}
where \(\mathcal{T}_i\) denotes hierarchical paragraph primitives, \(\mathcal{C}_i\) denotes conditional-logic table subgraphs, and \(\mathcal{F}_i\) denotes formula graph primitives. The compiled outputs from all documents are then merged into the base multimodal graph
\begin{equation}
    \mathcal{G}_0 = (\mathcal{V}, \mathcal{E}, \tau, \rho, \psi),
\end{equation}
where \(\tau : \mathcal{V} \rightarrow \mathcal{Y}\) assigns node types, \(\rho : \mathcal{E} \rightarrow \mathcal{R}\) assigns relation labels, and \(\psi\) stores attributes such as source coordinates, clause identifiers, units, and release tags.

Equations (1) to (3) should be read in sequence: (1) defines the raw document modalities, (2) defines their compilation into typed primitives, and (3) defines the merged graph that is indexed later.

\subsection{Executable Graph Primitives}
\begin{definition}[Executable graph primitive]
An executable graph primitive is a typed subgraph whose meaning can be recovered by binding predicates or traversing operator dependencies, without reconstructing the original two-dimensional document layout at query time.
\end{definition}

For a table instance \(b \in X_i^{\text{tab}}\) with row headers \(\mathcal{H}^{r} = \{h_i^{r}\}_{i=1}^{m}\), column headers \(\mathcal{H}^{c} = \{h_j^{c}\}_{j=1}^{n}\), data cells \(\mathcal{X} = \{x_{ij}\}\), and optional predicates \(\mathcal{P} = \{p_{ij}\}\) extracted from footnotes or applicability notes, the compiler builds a conditional-logic subgraph
\begin{equation}
    \mathcal{C}(b) = (\mathcal{V}_{b}, \mathcal{E}_{b}),
\end{equation}
with
\begin{equation}
    \mathcal{V}_{b} = \mathcal{H}^{r} \cup \mathcal{H}^{c} \cup \mathcal{X} \cup \mathcal{P},
\end{equation}
and
\begin{equation}
\begin{aligned}
    \mathcal{E}_{b} = {} & \{(h_i^{r}, x_{ij})\} \cup \{(h_j^{c}, x_{ij})\} \\
    & \cup \{(p_{ij}, x_{ij})\} \cup \mathcal{E}_{\text{src}},
\end{aligned}
\end{equation}
where \(\mathcal{E}_{\text{src}}\) links each cell back to its clause and page coordinates. At the document level, the compiled table output is \(\mathcal{C}_i = \bigcup_{b \in X_i^{\text{tab}}} \mathcal{C}(b)\). A lookup over the table then becomes a constrained traversal over headers and predicates, rather than string matching over flattened rows.

Equations (4) to (6) therefore specify how one table instance is compiled into a traversable condition structure.

Given an equation instance \(e \in X_i^{\text{eq}}\) and its abstract syntax tree \(A_e\), the compiler builds a formula graph
\begin{equation}
    \mathcal{F}(e) = (\mathcal{U}_{e} \cup \mathcal{K}_{e} \cup \mathcal{O}_{e}, \mathcal{E}^{\text{op}}_{e} \cup \mathcal{E}^{\text{def}}_{e}),
\end{equation}
where \(\mathcal{U}_{e}\) are variable nodes, \(\mathcal{K}_{e}\) are constant nodes, and \(\mathcal{O}_{e}\) are operator nodes. The edge set \(\mathcal{E}^{\text{op}}_{e}\) preserves operand order and precedence, while \(\mathcal{E}^{\text{def}}_{e}\) connects symbols to their local textual definitions. At the document level, the compiled formula output is \(\mathcal{F}_i = \bigcup_{e \in X_i^{\text{eq}}} \mathcal{F}(e)\). This gives the retriever access to sub-expressions and symbol semantics instead of only the surface LaTeX string.

Equation (7) plays the same role for formulas, namely turning a displayed expression into a retrievable primitive with explicit operator structure.

\subsection{Structural Amnesia as the Main Failure Mode}
Standard text chunking induces a lossy mapping \(\Pi : D_i \mapsto \{s_k\}_{k=1}^{M_i}\) that preserves token order inside each span but ignores two-dimensional adjacency, header inheritance, and operator precedence. For tables and formulas, this removes information required to recover the original selection or computation logic. We call this loss \emph{structural amnesia}. In telecom standards, a question can often be answered only after identifying the correct row, the correct column, the relevant release condition, and the applicable formula scope; once these bindings are dropped, later retrieval stages have little chance to recover them reliably.

\subsection{Why LLM-Driven Hierarchical Indexing Becomes Expensive}
Let \(|C_l|\) denote the number of communities at hierarchy level \(l\) and let \(T_{\text{prompt}}\) be the average token budget required to summarize one community. If every level requires LLM summaries, the indexing-time token usage grows as
\begin{equation}
    \mathrm{Tok}_{\text{LLM}} \propto \sum_{l=1}^{L} |C_l| T_{\text{prompt}}.
\end{equation}
This becomes costly in technical corpora because the graph must be refreshed as standards evolve. SEM-RAG targets this bottleneck by performing hierarchy construction through graph optimization first and delaying LLM usage until the final macro-node summaries are needed.

\begin{figure*}[t]
    \centering
    \includegraphics[width=\linewidth]{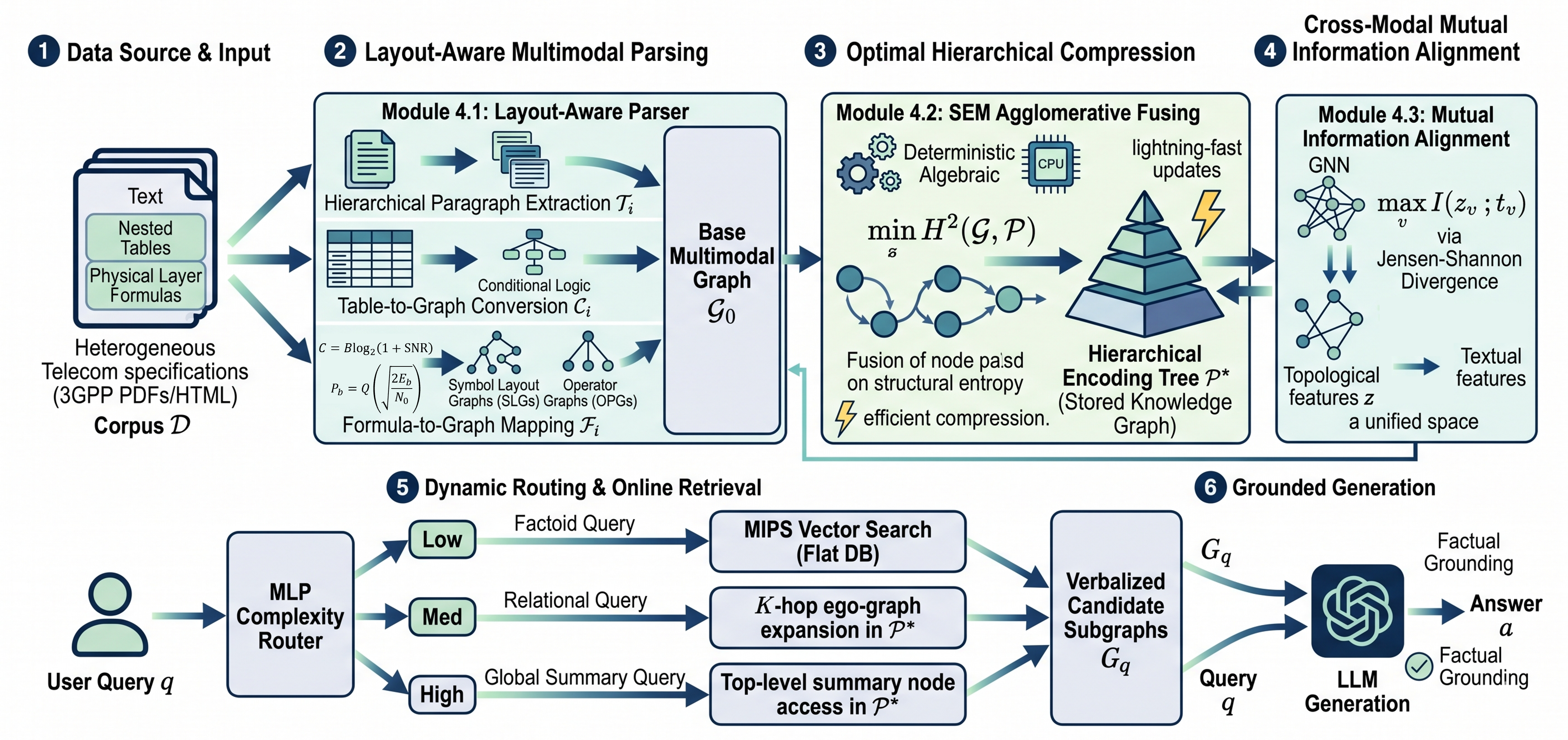}
    \caption{Overview of SEM-RAG. Module 4.1 extracts hierarchical paragraph primitives \(\mathcal{T}_i\), conditional-logic table subgraphs \(\mathcal{C}_i\), and formula graphs \(\mathcal{F}_i\), which are merged into the base multimodal graph \(\mathcal{G}_0\). Module 4.2 compresses \(\mathcal{G}_0\) into the hierarchical encoding tree \(\mathcal{P}^*\). Module 4.3 aligns topological features \(z_v\) and textual features \(t_v\). At query time, the complexity router selects a retrieval path and produces candidate subgraph \(G_q\) for grounded generation.}
    \label{fig:overall}
\end{figure*}

\section{Methodology: The SEM-RAG Architecture}
\label{sec:methodology}
SEM-RAG operates in an offline stage and an online stage, as shown in Figure~\ref{fig:overall}. The offline stage is the core of the method, where source documents are compiled into typed graph primitives, compressed into a hierarchy with SEM, and aligned with textual embeddings. The online stage then uses a lightweight controller to choose a retrieval mode and assemble evidence for grounded generation.

\subsection{SEM-RAG at a Glance}
At a high level, SEM-RAG has four layers. The first layer is a document-compilation layer that turns clauses, tables, and formulas into typed graph primitives with provenance. The second layer is an indexing layer that compresses the compiled graph into a hierarchy using SEM rather than recursive LLM summarization. The third layer is an alignment layer that reduces the gap between textual queries and graph primitives. The fourth layer is a query-execution layer in which a lightweight controller selects direct lookup, local subgraph retrieval, or macro-node retrieval before grounded generation. This subsection is intended only to provide the architectural overview; the technical details of each layer are presented in the following subsections.

\subsection{Structure-Preserving Multimodal Compilation}
Module 4.1 in Fig.~\ref{fig:overall} produces three compiled outputs for each document: hierarchical paragraph primitives \(\mathcal{T}_i\), conditional-logic table subgraphs \(\mathcal{C}_i\), and formula graph primitives \(\mathcal{F}_i\). Each extracted unit keeps clause identifiers and page coordinates so that every retrieved node can be traced back to the source clause. Table~\ref{tab:primitives} summarizes the primitive types materialized by the compiler.

\begin{table*}[t]
\caption{Typed graph primitives used in SEM-RAG.}
\label{tab:primitives}
\centering
\large
\begin{tabular}{@{}lll@{}}
\toprule
\textbf{Primitive} & \textbf{Main nodes} & \textbf{Key relations} \\
\midrule
Text span & section, paragraph, term & contains, refers-to, defines \\
Table & row header, column header, cell, predicate & row-bind, col-bind, activates \\
Formula & operator, variable, constant & operand-of, precedes, defines \\
\bottomrule
\end{tabular}
\end{table*}

\textit{Hierarchical paragraph extraction (\(\mathcal{T}_i\)):} Each document is first segmented into sections, paragraphs, tables, and displayed equations together with page identifiers and bounding boxes. Paragraph nodes are linked to their enclosing sections, referenced tables, referenced equations, and cross-referenced terms. This part of the compiler preserves the surrounding normative context, which is often where definitions, exceptions, and applicability clauses reside.

\textit{Table-to-graph conversion (\(\mathcal{C}_i\)):} We resolve merged cells and spanning headers into explicit header paths. Each non-empty data cell becomes a value node with incoming edges from its row-header chain, column-header chain, and any footnote or applicability predicate. Units, release tags, and clause numbers are attached as attributes. This representation preserves the selection logic needed for telecom tables such as capability flags, service bitmaps, and power-control matrices.

\textit{Formula-to-graph mapping (\(\mathcal{F}_i\)):} Each equation is normalized, parsed into an abstract syntax tree, and converted into an operator graph. Operators become internal nodes, while variables and constants become leaves. Operand order and precedence are encoded on typed edges. Symbols are then linked to the nearest textual definitions in the same clause or in the preceding definitions block. This allows the retriever to surface either the full equation or the relevant subexpression.

The union of \(\mathcal{T}_i\), \(\mathcal{C}_i\), and \(\mathcal{F}_i\) over all documents forms the base multimodal graph \(\mathcal{G}_0\). In this paper we stop at text, tables, and formulas; rich diagram understanding is left to future work.

\subsection{Entropy-Guided Hierarchical Compression}
After Module 4.1, all compiled primitives are merged into \(\mathcal{G}_0\). Module 4.2 then builds a retrieval hierarchy by minimizing structural entropy. The goal is not to discover semantic topics by prompting an LLM; it is to obtain a compact index that preserves dense local structure and reduces the search space at query time.

For the base graph \(\mathcal{G}_0\), the one-dimensional structural entropy is
\begin{equation}
    H^1(\mathcal{G}) = - \sum_{i=1}^{|\mathcal{V}|} \frac{d_i}{V_G} \log_2 \frac{d_i}{V_G},
\end{equation}
where \(d_i\) is the degree of node \(v_i\) and \(V_G\) is the graph volume \cite{li2016structural}. Given a partition tree \(\mathcal{P} = \{C_1, C_2, \dots, C_k\}\), the corresponding two-dimensional structural entropy is
\begin{equation}
\begin{aligned}
    H^2(\mathcal{G}, \mathcal{P}) = {} & \sum_{j=1}^{k} \frac{V_{C_j}}{V_G} \left(- \sum_{v_i \in C_j} \frac{d_i}{V_{C_j}} \log_2 \frac{d_i}{V_{C_j}}\right) \\
    & - \sum_{j=1}^{k} \frac{g_j}{V_G} \log_2 \frac{V_{C_j}}{V_G},
\end{aligned}
\end{equation}
where \(V_{C_j}\) is the volume of community \(C_j\) and \(g_j\) is the number of cut edges between \(C_j\) and the rest of the graph.

Equation (9) measures the baseline uncertainty of the base graph, whereas Eq. (10) defines the partition objective actually minimized by SEM.

Starting from singleton communities, we iteratively merge adjacent communities with the largest decrease in \(H^2\). In our implementation, local entropy changes are updated with a priority queue, which keeps the procedure efficient on sparse document graphs. No LLM calls are used during this merging stage. Once the hierarchy \(\mathcal{P}^*\) is fixed, we generate short natural-language labels only for a small set of top-level communities.

\subsection{Retrieval-Space Alignment}
Even after compilation, natural-language queries and graph primitives live in different representational spaces. Module 4.3 therefore aligns structural embeddings and textual embeddings of the same primitive. Let \(z_v\) denote the topological feature of node \(v\) obtained from a lightweight message-passing GNN, and let \(t_v\) denote the textual feature of the same node obtained from a pretrained encoder. The alignment goal is to maximize the agreement between the two views,
\begin{equation}
    \max \sum_{v \in \mathcal{V}} I(z_v ; t_v),
\end{equation}
where \(I(\cdot ; \cdot)\) denotes mutual information.

In practice, we do not optimize Eq. (11) directly. Instead, we minimize the following JSD-based surrogate
\begin{equation}
    \mathcal{L}_{\text{align}} = \mathrm{JSD}\!\left(p_g(\cdot \mid v), p_t(\cdot \mid v)\right),
\end{equation}
where \(p_g(\cdot \mid v)\) and \(p_t(\cdot \mid v)\) are projected graph-view and text-view distributions for node \(v\) in a shared space. During training, matched pairs are pulled together while mismatched pairs are pushed apart through negative sampling.

Equation (11) states the ideal agreement objective, and Eq. (12) gives the surrogate used in training.

Within SEM-RAG, this layer serves as a retrieval support module rather than as an independent representation-learning claim. Its role is to reduce the mismatch between natural-language questions and compiled table or formula nodes at query time.

\subsection{Query-Time Controller and Evidence Assembly}
At query time, a lightweight MLP complexity router selects among direct lookup, local subgraph expansion, and macro-node retrieval. The router uses inexpensive query features, including the number of detected entities, the presence of symbolic tokens, query length, and the entropy of the initial vector hits. It is trained on development-set route labels that reflect the minimal retrieval mode needed to answer a question. In the notation of Fig.~\ref{fig:overall}, the router assigns one of three complexity levels: Low, Med, or High.

\begin{enumerate}
    \item \textbf{Low / Factoid mode:} For isolated parameter queries, the system retrieves paragraph and value nodes directly from the vector index.
    \item \textbf{Med / Relational mode:} For dependency or comparison questions, the system anchors entities and expands a bounded \(K\)-hop typed subgraph.
    \item \textbf{High / Global mode:} For abstract or protocol-level questions, the system retrieves a small number of macro-nodes from \(\mathcal{P}^*\) and their summaries.
\end{enumerate}

The retrieved nodes are first assembled into a candidate subgraph \(G_q\). Rather than passing raw graph dumps to the generator, we verbalize \(G_q\) into schema-preserving evidence records \(R_q\) of the form \((\text{clause}, \text{subject}, \text{relation}, \text{object}, \text{condition}, \text{provenance})\). This keeps the LLM context compact while preserving where each retrieved fact came from.

The end-to-end pipeline is summarized in Algorithm~\ref{alg:sem_rag}.

\begin{algorithm}[t]
\caption{SEM-RAG Pipeline}
\label{alg:sem_rag}
\begin{algorithmic}[1]
\REQUIRE Telecom corpus \(\mathcal{D} = \{D_1, D_2, \dots, D_N\}\), user query \(q\)
\ENSURE Grounded answer \(a\)
\STATE \textbf{Offline stage: structure-preserving compilation}
\FOR{each document \(D_i \in \mathcal{D}\)}
    \STATE Detect sections, paragraphs, tables, and equations with provenance metadata
    \STATE Extract hierarchical paragraph primitives \(\mathcal{T}_i\)
    \STATE Compile tables into conditional-logic subgraphs \(\mathcal{C}_i\)
    \STATE Compile formulas into graph primitives \(\mathcal{F}_i\) and link symbols to local definitions
    \STATE Link \(\mathcal{T}_i\), \(\mathcal{C}_i\), and \(\mathcal{F}_i\) into \(\mathcal{G}_0\)
\ENDFOR
\STATE \textbf{Offline stage: hierarchy construction}
\STATE Initialize each node in \(\mathcal{G}_0\) as a singleton community
\STATE Initialize partition tree \(\mathcal{P}\)
\WHILE{a local merge decreases \(H^2(\mathcal{G}_0, \mathcal{P})\)}
    \STATE Merge the adjacent community pair with the largest entropy decrease
\ENDWHILE
\STATE Obtain the hierarchy \(\mathcal{P}^*\) and label only the top-level communities
\STATE \textbf{Offline stage: alignment training}
\STATE Learn aligned graph/text representations for linked primitives using \(\mathcal{L}_{\text{align}}\)
\STATE \textbf{Online stage: routed retrieval}
\STATE Predict complexity level \(\ell \in \{\text{Low}, \text{Med}, \text{High}\}\) with the MLP complexity router
\IF{\(\ell\) is Low}
    \STATE Retrieve paragraph and value nodes by vector search
\ELSIF{\(\ell\) is Med}
    \STATE Anchor entities and expand a bounded \(K\)-hop typed subgraph
\ELSE
    \STATE Retrieve the most relevant macro-nodes from \(\mathcal{P}^*\)
\ENDIF
    \STATE Assemble candidate subgraph \(G_q\) and verbalize it into evidence records \(R_q\)
\STATE Generate \(a \leftarrow \mathrm{LLM}(q, R_q)\)
\RETURN \(a\)
\end{algorithmic}
\end{algorithm}

\section{Complexity Analysis and Scope of Claims}
\label{sec:analysis}
We intentionally avoid theorem-level guarantees on end-to-end answer correctness. The final behavior of the system depends on upstream parsing quality, the sparsity of the compiled graph, the distribution of query types, and the properties of the backbone LLM. What we can state more cleanly are indexing and retrieval cost implications under explicit assumptions.

We assume that: 1) the compiled document graph is sparse, that is, \(|\mathcal{E}| = O(|\mathcal{V}|)\), which is typical when each text span, table cell, and operator introduces only a bounded number of local edges; 2) after SEM compression, at most \(k\) top-level communities are sent to the LLM for labeling, with \(k\) fixed by a summary budget; 3) local online retrieval expands at most \(K_{\max}\) hops and global retrieval returns at most \(M\) macro-nodes; and 4) the vector index supports sublinear nearest-neighbor lookup.

\begin{proposition}[Indexing-time LLM token usage]
Under Assumptions 1 and 2, the LLM token usage of SEM-RAG during hierarchy construction is \(O(kT_s)\), where \(T_s\) is the maximum token budget for one top-level summary. By contrast, an LLM-centered hierarchical pipeline that summarizes every community across \(L\) levels requires
\begin{equation}
    \Omega\!\left(T_s \sum_{l=1}^{L} |C_l|\right)
\end{equation}
tokens, where \(|C_l|\) is the number of communities at level \(l\).
\end{proposition}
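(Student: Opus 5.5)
The proof is a counting argument over the indexing pipeline of Algorithm~\ref{alg:sem_rag}. For each offline stage we ask whether it issues any LLM calls, and if so how many tokens each call may consume.

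\emph{Upper bound for SEM-RAG.} We split the offline stage into compilation, hierarchy construction, and top-level labeling. The proposition concerns hierarchy construction, so compilation and alignment training are outside the count; neither of them calls the LLM anyway.

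During hierarchy construction, the merge loop that minimizes $H^2(\mathcal{G}_0,\mathcal{P})$ in Eq.~(10) uses only degrees, volumes $V_{C_j}$, and cut counts $g_j$. These are purely combinatorial quantities maintained in a priority queue, so the loop contributes zero tokens. Assumption~1 (sparsity) guarantees that the loop terminates in polynomial time with $O(|\mathcal{V}|)$ candidate merges. It does not enter the token count, but it shows that no LLM fallback is ever needed.

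The only LLM usage is labeling the top-level communities of $\mathcal{P}^*$. By Assumption~2 there are at most $k$ such calls, and each consumes at most $T_s$ tokens. Summing gives at most $kT_s = O(kT_s)$. The bound is independent of $|\mathcal{V}|$ and of the depth of $\mathcal{P}^*$, which is the point of the comparison.

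\emph{Lower bound for the LLM-centered pipeline.} We formalize the pipeline as in Eq.~(8): by definition, it issues at least one summary call for every community $C \in C_l$ at every level $l=1,\dots,L$. To get $\Omega$ rather than $O$, we need a per-call lower bound. We take $T_s$ to also bound each summary call from below up to a constant. Concretely, we assume each call consumes at least $c\,T_s$ tokens for some fixed $c>0$, which follows if summaries are produced at a fixed target length and prompts include at least the summary template. Summing over levels then gives at least $c\,T_s\sum_{l=1}^{L}|C_l|$ tokens, which is the bound in Eq.~(13).

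\emph{Main obstacle.} The difficulty is not computational but definitional. The statement uses $T_s$ as a \emph{maximum} budget, so strictly it supports only an upper bound. The $\Omega$ claim therefore needs the lower-bound interpretation of per-community cost to be made explicit. I would state it as a modeling assumption: each community summary costs $\Theta(T_s)$. This is consistent with $T_{\text{prompt}}$ in Eq.~(8) being an average cost comparable to $T_s$.

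With that assumption in place, the two bounds combine immediately. Whenever $k \ll \sum_l |C_l|$, which is the typical case since $\sum_l|C_l| \ge |C_1|$ grows with $|\mathcal{V}|$, the gap is multiplicative.
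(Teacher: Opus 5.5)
Your proposal uses the same counting argument as the paper: SEM merging works only with degrees, volumes, and cut counts, so it issues no LLM calls; at most $k$ top-level labels each cost at most $T_s$; and the baseline pays one summary call per community at every level. The one difference is that you state explicitly the per-call lower bound $c\,T_s$ (a $\Theta(T_s)$ cost per community summary) that the $\Omega$ claim requires, whereas the paper derives its lower bound from a per-call size of ``at most'' $T_s$, which on its own gives only an upper bound — so your version makes explicit an assumption the paper's proof relies on without stating it.
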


\textit{Proof:} In SEM-RAG, SEM merging operates directly on the graph and does not call the LLM. LLM usage is deferred until the hierarchy is fixed, after which only \(k\) top-level communities are labeled, each with a budget of at most \(T_s\) tokens. The total token usage is therefore bounded by \(kT_s\). In a baseline that summarizes every community across multiple levels, each community incurs one summary call of size at most \(T_s\), giving a lower bound proportional to \(T_s \sum_l |C_l|\). \hfill \(\square\)

\begin{remark}[Expected query-time retrieval cost]
Under Assumptions 3 and 4, the expected retrieval cost can be expressed as
\begin{equation}
\begin{aligned}
    \mathbb{E}[T_{\text{ret}}] = {} & p_{\mathrm{low}} O(\log |\mathcal{V}|) \\
    & + p_{\mathrm{med}} O(|A_q| d_{\mathrm{eff}}^{K_{\max}}) \\
    & + p_{\mathrm{high}} O(M),
\end{aligned}
\end{equation}
where \(p_{\mathrm{low}}, p_{\mathrm{med}}, p_{\mathrm{high}}\) denote the query-type proportions, \(A_q\) is the set of query anchors, and \(d_{\mathrm{eff}}\) is the effective degree after compression. If the router has accuracy \(\alpha < 1\), misrouted queries add an overhead term \(\Delta_{\alpha}\) corresponding to the expected gap between the chosen route and the minimal required route. This is not a worst-case end-to-end latency guarantee; it only clarifies that routed retrieval confines online work to bounded retrieval modes instead of unconstrained traversal over the full graph.
\end{remark}

\section{Experimental Design and Evaluation}
\label{sec:experiments}
We evaluate SEM-RAG with a deliberately focused question in mind: does structure-preserving document compilation improve retrieval over telecom standards, and can SEM make that richer representation practical at indexing time? This framing follows the paper's narrowed thesis. The central claims concern compiled graph primitives and entropy-guided compression. By contrast, the alignment layer and the query-time controller are evaluated as supporting mechanisms that help the compiled index operate efficiently online rather than as isolated scientific contributions in their own right. Accordingly, the experiments are organized to test each link in this chain instead of reporting only aggregate question-answering scores.

Four questions guide the study. First, does compilation into table and formula primitives reduce the failures caused by structural amnesia? Second, once the corpus is compiled into a graph, can SEM compress it without damaging the multi-hop dependencies that telecom queries require? Third, do alignment and routing improve online efficiency without becoming the main source of the observed accuracy gains? Fourth, are the improvements stable across backbone LLMs with different pretraining and deployment regimes? The following subsections address these questions in the same order.

\subsection{Experimental Setup and Implementation Details}
\textbf{Evaluation protocol:} All methods index the same source collections and answer the same held-out benchmark questions. Unless a baseline requires a different internal policy, we keep the generator configuration fixed and cap the final evidence budget at five retrieved items so that answer quality is compared under a similar context budget. For passage-based baselines, an evidence item is a retrieved text chunk or summary node. For SEM-RAG, an evidence item after verbalization may correspond to a paragraph node, a compiled table value with its row and column header path, an operator subgraph, or a macro-node summary. This distinction is important: the comparison is not between larger and smaller prompts, but between different ways of representing the same underlying evidence.

\textbf{Infrastructure and hyperparameters:} SEM-RAG is implemented in Python 3.10 and PyTorch 2.2. Offline parsing and SEM optimization run on a dual-socket Intel Xeon Platinum 8380 server with 512GB RAM. For dense embeddings, we use OpenAI's \texttt{text-embedding-3-large} model (dimension \(3072\)). Unless otherwise stated, generator calls use temperature \(0.1\) and \texttt{top\_p} \(= 0.9\) to keep sampling variance low. The reported indexing times include document normalization, multimodal graph construction, SEM merging, and the final top-level summarization stage, while excluding one-time environment setup and dataset download. Query latency is measured under warm-start conditions with batch size one, from query arrival to final answer generation.

\textbf{Parser configuration:} Documents are first converted into a structured intermediate JSON containing reading order, clause identifiers, table boundaries, equation spans, and page coordinates. Merged row and column headers are resolved into explicit header paths rather than left as visual spans. Each non-empty table cell is stored together with its row-header path, column-header path, local predicates, unit strings, and provenance coordinates. Equations are normalized and parsed into abstract syntax trees before operator-graph construction, after which symbols are linked back to nearby textual definitions whenever available. In other words, the parser does not merely preserve layout for display; it converts layout into executable retrieval units that can later be verbalized as clause-aware evidence records. This detail matters because the main gains of SEM-RAG depend on the quality of the compiled graph, not only on downstream retrieval heuristics.

\textbf{Baselines and retrieval settings:} We benchmark SEM-RAG against four representative architectures:
\begin{itemize}
    \item \textbf{Naive RAG:} A recursive character text splitter (chunk size \(1024\), overlap \(128\)) with Maximum Inner Product Search (MIPS) over a Milvus vector database.
    \item \textbf{RAPTOR:} A hierarchical text-tree retrieval system based on recursive summarization.
    \item \textbf{LightRAG~\cite{guo2024lightrag}:} A dual-level entity retrieval framework optimized for operational speed.
    \item \textbf{Standard GraphRAG~\cite{edge2024global}:} A graph baseline that uses LLMs for entity extraction and bottom-up community summarization.
\end{itemize}
Unless specifically analyzed in Section~\ref{subsec:llm_stability}, \textbf{GPT-5.2} serves as the primary generative backbone for all evaluated systems. For SEM-RAG, the controller selects among direct lookup, bounded subgraph expansion (up to \(K=3\) hops), and macro-node retrieval (up to five top-level nodes). This setup is meant to reflect the intended role of each module: compiled structure and SEM determine what knowledge is available and how it is indexed, whereas routing only decides how much of that index must be activated for a particular query.

\textbf{Metrics, subsets, and annotation protocol:} We report task accuracy, MRR@10, EM/F1, indexing token usage, indexing wall-clock time, and query latency. Indexing time is averaged over three runs. Query latency is averaged over 1,000 held-out questions. For retrieval evaluation, MRR@10 is computed from the rank of the first retrieved item whose provenance overlaps the gold supporting evidence within the top 10 results; the gold supporting evidence is defined as the minimal clause, table cell, or formula span annotated for evaluation. EM is exact string match after normalization. F1 is token-level answer F1 after lowercasing, punctuation removal, and acronym canonicalization. In Table~\ref{tab:overall_performance}, the reported F1 is macro-averaged over the TeleQnA and ORAN-Bench-13K answer sets. To make the analysis more diagnostic, we construct targeted subsets in addition to the original benchmark splits. A question is labeled \textit{table-heavy} when the reference answer requires an explicit row and column binding, a table footnote, or a condition attached to a specific cell. A question is labeled \textit{formula-heavy} when the reference answer depends on operator structure, symbol grounding, or a numerical constraint that cannot be recovered from surrounding prose alone. For the multi-hop analysis, hop depth is assigned according to the shortest supporting typed path in the compiled graph and then checked against the cited clauses. For the error analysis in Figure~\ref{fig:qa_hallucination}b, we manually inspect 200 randomly sampled erroneous answers per system and assign each failure to row or column mismatch, mathematical inconsistency, or unsupported synthesis. These targeted views are important because average QA scores alone can hide whether a method truly solves structural amnesia or simply benefits from better generic retrieval.

\textbf{Benchmark datasets:} Our evaluation uses a set of expert-curated telecommunications datasets:
\begin{itemize}
    \item \textbf{SPEC5G~\cite{karim2023spec5g}:} A 5G protocol security dataset extracted directly from cellular network specifications.
    \item \textbf{TSpec-LLM~\cite{nikbakht2024tspec}:} A multimodal dataset spanning 3GPP Releases 8 to 19 and retaining technical tables, parameter grids, and mathematical derivations.
    \item \textbf{TeleQnA~\cite{maatouk2023teleqna}:} A 10,000-question benchmark for inferential telecom knowledge.
    \item \textbf{ORAN-Bench-13K~\cite{gajjar2024oranbench}:} An acronym-dense benchmark for disaggregated Open RAN architectures.
\end{itemize}
Together, these datasets cover both broad telecom knowledge and the more brittle cases that depend on standards-specific layouts, such as service bitmaps, capability matrices, and formula-derived thresholds. We do not force every analysis to use every benchmark, because the diagnostics target different properties. End-to-end QA is reported on TeleQnA and ORAN-Bench-13K because they provide broad question coverage and stable answer formats. Structure-sensitive diagnostics such as table-heavy, formula-heavy, and multi-hop evaluation are drawn from TeleQnA and TSpec-LLM, where supporting evidence can be traced back to specific clauses, table cells, or equations. Scalability and latency are systems measurements on the indexed corpus and are therefore reported separately from benchmark accuracy.

\subsection{Multimodal QA Performance and Hallucination Mitigation}
This subsection directly tests the paper's primary claim: in telecom standards, much of the failure of conventional RAG comes from losing structural relations during ingestion rather than from a lack of semantic similarity alone. Unless otherwise noted, Table~\ref{tab:overall_performance} reports full-benchmark end-to-end QA results on TeleQnA and ORAN-Bench-13K, whereas Figure~\ref{fig:qa_hallucination} reports table-heavy and formula-heavy subsets drawn from TeleQnA and TSpec-LLM. To make the structural effect visible, we do not rely only on benchmark-wide averages. Instead, we isolate questions whose gold evidence necessarily passes through a compiled table primitive or a compiled formula primitive. If SEM-RAG is helping mainly because it preserves executable structure, then its advantage should become especially large on these subsets.

\begin{figure}[t]
    \centering
    \includegraphics[width=\columnwidth]{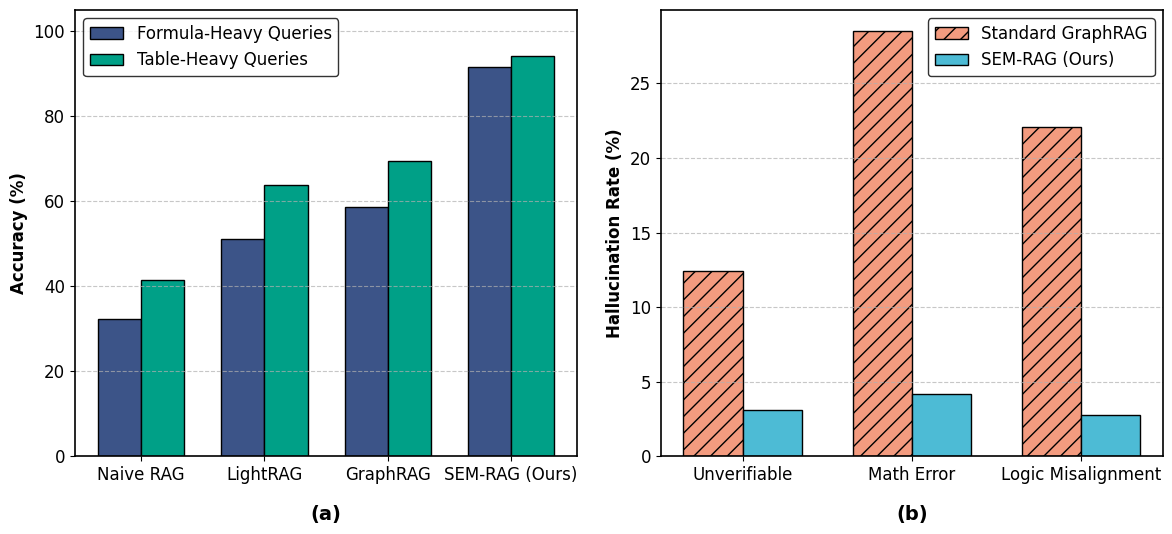}
    \caption{Performance on table-heavy and formula-heavy questions, together with manually annotated error categories. SEM-RAG gains are concentrated on cases where preserving header bindings or operator structure matters.}
    \label{fig:qa_hallucination}
\end{figure}

Table~\ref{tab:overall_performance} first shows that the advantage is not restricted to a narrow slice of the benchmarks. SEM-RAG reaches 94.1\% accuracy on TeleQnA, 93.8\% on ORAN-Bench-13K, and the best MRR@10 and overall F1 among all compared systems. The improvement over Standard GraphRAG is already large at the aggregate level, which indicates that the compiled representation remains useful even when the benchmark mixes simple acronym questions with more complex procedural ones.

\begin{table}[t]
\caption{Overall performance on TeleQnA and ORAN-Bench-13K with GPT-5.2}
\label{tab:overall_performance}
\centering
\begin{tabular}{@{}lcccc@{}}
\toprule
\textbf{Architecture} & \textbf{TeleQnA} & \textbf{ORAN-Bench} & \textbf{MRR@10} & \textbf{F1} \\
\midrule
Naive RAG & 45.5\% & 42.1\% & 0.38 & 0.44 \\
RAPTOR & 74.2\% & 68.5\% & 0.67 & 0.70 \\
LightRAG~\cite{guo2024lightrag} & 76.8\% & 75.6\% & 0.69 & 0.74 \\
Standard GraphRAG~\cite{edge2024global} & 81.2\% & 78.4\% & 0.72 & 0.79 \\
\textbf{SEM-RAG (Ours)} & \textbf{94.1\%} & \textbf{93.8\%} & \textbf{0.89} & \textbf{0.92} \\
\bottomrule
\end{tabular}
\end{table}

The subset analysis in Figure~\ref{fig:qa_hallucination}a is even more revealing. On formula-heavy questions, text-first baselines degrade sharply because equations are retrieved either as broken strings or as nearby prose that mentions only part of the computation. LightRAG and Standard GraphRAG reach 51.2\% and 58.7\% accuracy on this subset, whereas SEM-RAG reaches 91.5\%. The gain here is consistent with the design of the formula compiler: operator nodes preserve precedence and operand structure, while symbol links preserve what each term actually means in the clause where it is defined. This makes the retrieved evidence closer to an executable expression than to a quoted line of math-marked text.

A similar pattern appears on table-heavy queries. In telecom standards, the answer is often not the literal text around a table but the correct binding between a row header, a column header, a footnote condition, and a cell value. Passage retrieval can bring the right page yet still return the wrong cell because the row and column logic has been flattened away. SEM-RAG performs better precisely because the compiled graph stores those bindings explicitly. In that sense, the parser is not a cosmetic preprocessing step; it determines whether the retrieval stage can recover the correct operational constraint at all.

The manual error analysis in Figure~\ref{fig:qa_hallucination}b supports this interpretation. Standard GraphRAG shows a 28.5\% mathematical inconsistency rate and a 22.1\% row or column logic error rate within the annotated error sample. SEM-RAG reduces row or column logic errors to 2.8\%, which is consistent with the fact that header paths, predicates, and cell values remain attached inside the graph. Importantly, this reduction is not explained by larger context windows or heavier prompting; the evidence budget is matched across systems. The result therefore supports the paper's main thesis that better structure, rather than only better semantic recall, is the primary driver of factual gains on these telecom tasks.

\subsection{Multi-hop Reasoning and Depth Traversability}
The previous subsection shows that compiled primitives help local factual grounding. The next question is whether SEM compression preserves enough connectivity for longer reasoning chains. This is a critical test for the second core claim of the paper. A compressed hierarchy is only useful if it reduces indexing cost without erasing the relational paths needed to answer cross-clause telecom questions. For this diagnostic, we use the subset of TeleQnA and TSpec-LLM questions for which a shortest supporting typed path can be annotated in the compiled graph.

To examine this, we evaluate subsets of questions that require 1-hop, 2-hop, and 3-hop evidence chains across the compiled graph. Here, a hop corresponds to a typed transition such as paragraph-to-term, term-to-table-cell, or symbol-to-definition. This categorization makes the analysis more informative than a generic QA score because it distinguishes direct parameter lookup from dependency tracing across multiple compiled primitives.

\begin{table}[ht]
\caption{Accuracy (\%) versus multi-hop reasoning depth}
\label{tab:multihop}
\centering
\begin{tabular}{@{}lccc@{}}
\toprule
\textbf{Model architecture} & \textbf{1-Hop} & \textbf{2-Hop} & \textbf{3-Hop} \\ \midrule
Naive RAG & 68.2 & 34.1 & 12.5 \\
RAPTOR & 76.5 & 48.2 & 22.4 \\
LightRAG & 80.1 & 62.5 & 45.3 \\
Standard GraphRAG & 82.5 & 71.4 & 58.7 \\ \midrule
\textbf{SEM-RAG (Ours)} & \textbf{96.2} & \textbf{92.5} & \textbf{87.1} \\ \bottomrule
\end{tabular}
\end{table}

Table~\ref{tab:multihop} shows that all methods remain reasonably competitive on 1-hop lookup, where strong lexical overlap already helps. The gap widens quickly as reasoning depth increases. At 3 hops, Standard GraphRAG reaches 58.7\% while SEM-RAG reaches 87.1\%. This behavior is significant for two reasons. First, it suggests that SEM compression is not merely shrinking the graph; it is retaining the dense local neighborhoods that carry the real explanatory structure of the specification. Second, it shows that the query does not lose contact with that structure as traversal proceeds.

The role of the alignment layer is best understood in this context. JSD alignment is not the main reason these paths exist; the paths are created by compilation. Its value is that it helps the query anchor onto the right neighborhood so that multi-hop traversal does not drift toward semantically similar but structurally irrelevant nodes. The result is therefore consistent with the paper's division of labor: compilation and SEM define the usable graph, while alignment helps the retriever stay on the right path within that graph.

\subsection{Indexing Scalability}
This subsection evaluates the practical consequence of replacing LLM-driven hierarchical indexing with SEM-based compression. For telecom standards, this issue is not secondary. Standards evolve across releases and change requests, so a retrieval system that is accurate but prohibitively expensive to rebuild is difficult to maintain in practice. The relevant question is therefore not only whether the hierarchy is useful, but whether it can be rebuilt without an LLM-dependent bottleneck. This is a benchmark-agnostic systems test conducted on progressively larger slices of the indexed telecom corpus.

Consistent with Proposition~1 in Section~\ref{sec:analysis}, we measure how indexing-time cost evolves with corpus size. For Standard GraphRAG, the reported cost includes entity extraction, graph construction, recursive community processing, and the associated LLM summaries. For SEM-RAG, the reported cost includes parsing, graph construction, entropy-guided merging, and only the final small set of top-level summaries. The comparison should therefore be read as a study of LLM-dependent indexing cost, not as a claim that every part of the system scales identically.

\begin{table}[ht]
\caption{Hierarchical indexing scalability: cost and time versus corpus size}
\label{tab:scalability}
\centering
\begin{tabular}{@{}c|cc|cc@{}}
\toprule
\multirow{2}{*}{\textbf{\begin{tabular}[c]{@{}c@{}}Corpus size\\ (entities)\end{tabular}}} & \multicolumn{2}{c|}{\textbf{Standard GraphRAG}} & \multicolumn{2}{c}{\textbf{SEM-RAG (Ours)}} \\ \cmidrule(l){2-5}
& \textbf{\begin{tabular}[c]{@{}c@{}}Tokens\\ (Millions)\end{tabular}} & \textbf{\begin{tabular}[c]{@{}c@{}}Time\\ (Hours)\end{tabular}} & \textbf{\begin{tabular}[c]{@{}c@{}}Tokens\\ (Millions)\end{tabular}} & \textbf{\begin{tabular}[c]{@{}c@{}}Time\\ (Hours)\end{tabular}} \\ \midrule
10,000 & 0.5 & 0.4 & \textbf{0.05} & \textbf{0.02} \\
50,000 & 3.2 & 2.1 & \textbf{0.08} & \textbf{0.05} \\
100,000 & 8.5 & 6.4 & \textbf{0.12} & \textbf{0.09} \\
500,000 & 45.0 & 31.5 & \textbf{0.25} & \textbf{0.32} \\
1,000,000 & 110.0 & 78.2 & \textbf{0.35} & \textbf{0.68} \\ \bottomrule
\end{tabular}
\end{table}

\begin{figure}[t]
    \centering
    \includegraphics[width=\columnwidth]{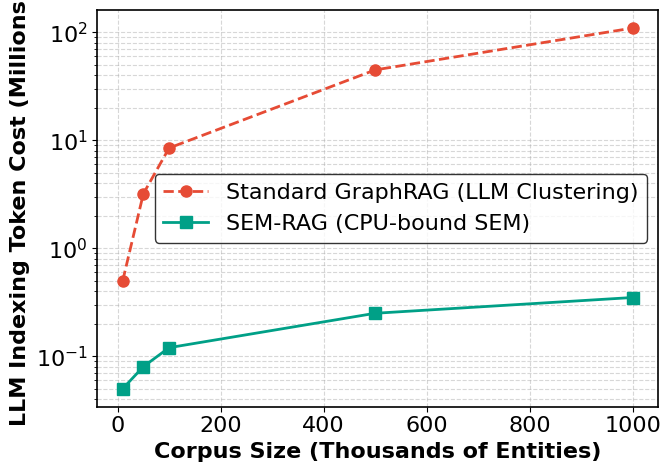}
    \caption{Indexing-time scalability comparison. SEM shifts hierarchy construction from LLM calls to graph optimization, which keeps the token budget small as corpus size grows.}
    \label{fig:scalability}
\end{figure}

Table~\ref{tab:scalability} and Figure~\ref{fig:scalability} show a clear divergence as the corpus grows. At one million entities, Standard GraphRAG consumes 110 million tokens and about 78 hours of processing. SEM-RAG uses 0.35 million tokens and less than one hour because the expensive LLM stage is postponed until after the hierarchy is fixed and then applied only to a small number of top-level communities. The practical interpretation is straightforward: the richer document representation introduced by SEM-RAG becomes viable precisely because hierarchy construction no longer requires repeated semantic summarization at every intermediate level.

This result also clarifies the scope of the claim. We are not arguing that graph construction becomes free, or that SEM is asymptotically optimal in every implementation. The empirical point is narrower and more defensible: once hierarchical compression is moved from repeated LLM calls to graph-topological optimization, the LLM-dependent portion of indexing no longer scales with the full number of intermediate communities. That is the systems property that makes the proposed compiled representation more realistic for continuously updated standards corpora.

\subsection{End-to-End Latency Breakdown}
While indexing is an offline concern, a telecom assistant still has to answer online queries with low delay. This subsection therefore evaluates the alignment layer and the controller in the role they are meant to play in the paper: as systems mechanisms that keep retrieval practical at serving time. The question is not whether they create the main performance gain, but whether they preserve that gain without introducing a prohibitive online cost. Latency is measured on the same 1,000-query held-out pool sampled from TeleQnA and ORAN-Bench-13K.

\begin{figure}[t]
    \centering
    \includegraphics[width=\columnwidth]{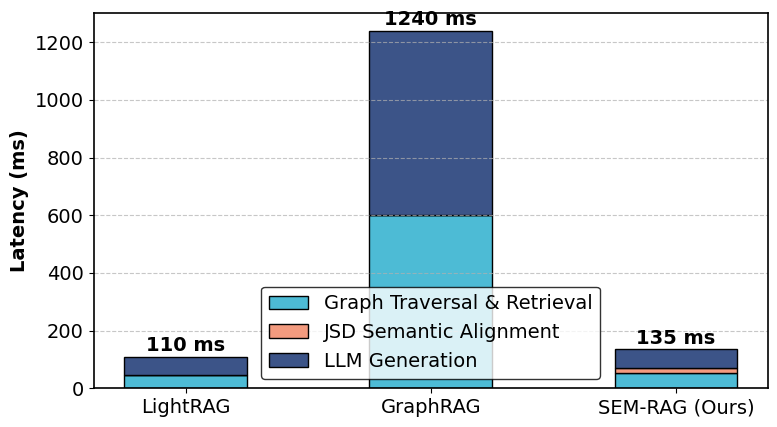}
    \caption{Breakdown of online latency into retrieval, alignment, and generation components. The alignment layer adds modest overhead, while routed retrieval keeps total latency close to lightweight baselines.}
    \label{fig:latency_breakdown}
\end{figure}

Figure~\ref{fig:latency_breakdown} decomposes online time into retrieval, alignment, and generation. SEM-RAG spends about \(18\) ms on alignment, but this overhead is small relative to the savings obtained from bounded retrieval and smaller evidence sets. The total online latency remains around \(135\) ms, close to LightRAG and well below Standard GraphRAG, whose retrieval stage exceeds \(1200\) ms in our setup. This indicates that the extra structure introduced at indexing time does not force the system into an expensive online traversal regime.

The result is easiest to interpret through the router. Many benchmark questions are local enough that full graph traversal is unnecessary. By routing these cases directly to vector lookup, the system avoids paying a graph-expansion cost on every query. More complex questions still trigger subgraph expansion or macro-node retrieval, but only when the query actually demands it. In other words, the controller does not create the knowledge used by the answer; it keeps the online cost of accessing that knowledge bounded.

It is also worth noting that generation remains the dominant fraction of end-to-end latency. This is why the online gains are visibly smaller than the indexing-time gains reported in Section~\ref{sec:experiments}. The alignment layer and the controller should therefore be read as enabling components: they preserve most of the retrieval benefit of the compiled graph without turning the serving path into a latency bottleneck.

\subsection{Ablation Study on Structural Components}
The ablation study isolates which parts of SEM-RAG are responsible for the final gains. To keep the comparison controlled, we hold the document pool, backbone generator, and evidence budget fixed, and remove one component at a time from the full system. We report ablations on TeleQnA, TSpec-LLM, and ORAN-Bench-13K because these three datasets stress formula reasoning, multimodal grounding, and table-intensive protocol lookup, respectively. We intentionally do not include a ``w/o SEM'' row in this table because removing SEM changes the hierarchy itself and therefore the entire indexing regime; its effect is isolated more cleanly in the scalability and multi-hop analyses above.

\begin{table}[ht]
\caption{Ablation study of core components across multiple datasets}
\label{tab:ablation}
\centering
\resizebox{\columnwidth}{!}{%
\begin{tabular}{@{}lccc@{}}
\toprule
\textbf{Architecture variant} & \textbf{\begin{tabular}[c]{@{}c@{}}TeleQnA\\ (EM \%)\end{tabular}} & \textbf{\begin{tabular}[c]{@{}c@{}}TSpec-LLM\\ (F1 \%)\end{tabular}} & \textbf{\begin{tabular}[c]{@{}c@{}}ORAN-Bench\\ (Acc. \%)\end{tabular}} \\ \midrule
\textbf{Full SEM-RAG pipeline} & \textbf{94.1} & \textbf{91.2} & \textbf{93.8} \\ \midrule
w/o Table-to-Graph & 58.2 & 61.4 & 55.7 \\
w/o Formula-to-OPG & 63.5 & 68.7 & 69.1 \\
w/o JSD Alignment & 71.8 & 73.5 & 75.2 \\
w/o MLP Routing & 82.4 & 84.1 & 81.9 \\ \bottomrule
\end{tabular}%
}
\end{table}

\begin{figure}[t]
    \centering
    \includegraphics[width=\columnwidth]{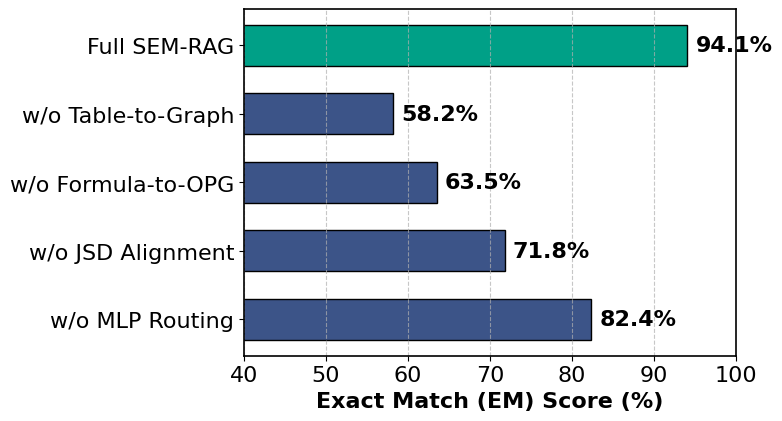}
    \caption{Effect of removing core SEM-RAG components on held-out QA performance.}
    \label{fig:ablation}
\end{figure} 

The ranking of the ablations matches the paper's intended thesis. Removing the table compiler or the formula compiler causes the largest degradation, whereas removing alignment or routing still hurts performance but to a lesser extent. This pattern matters because it shows that the central gains come from compiled structure rather than from query-time heuristics.

\textbf{Table-to-Graph conversion:} Removing the table compiler hurts ORAN-Bench the most, dropping accuracy from 93.8\% to 55.7\%. This is consistent with the heavy use of matrices, capability tables, and state-dependent profiles in Open RAN documents. Once the row and column binding is flattened back into text, the retriever may still land near the relevant table but can no longer reliably identify the correct cell under the applicable condition.

\textbf{Formula-to-OPG mapping:} Removing the operator-graph compiler reduces TeleQnA exact match to 63.5\%, showing that formula understanding does not transfer reliably when equations are left as flat strings. In these cases, the generator often sees the right symbols but not the executable structure that determines precedence, substitution, or constraint scope. The drop therefore supports the claim that formula compilation is necessary for stable quantitative reasoning over telecom clauses.

\textbf{JSD alignment:} Disabling alignment lowers TSpec-LLM F1 to 73.5\%. This confirms that alignment is useful, but the interpretation should remain modest. The graph still exists and still contains the correct compiled evidence; the main issue is that the query anchors less precisely to the intended neighborhood. This is exactly the role assigned to alignment in the paper: it is an enabling retrieval layer that protects the benefits of the compiled graph at query time.

\textbf{MLP routing:} Forcing the system to always retrieve from larger graph regions reduces TeleQnA EM to 82.4\%. More evidence does not automatically mean better evidence. Without routing, the generator receives larger but noisier contexts, especially for short factoid queries whose answer was already available through direct lookup. The router should therefore be read as a systems optimization that controls noise and latency, not as the principal source of semantic improvement.

In general, the ablation results reinforce a clear hierarchy among the components. The parser and the compiled primitives are foundational, SEM makes them scalable, and alignment plus routing help those benefits survive in the online pipeline.

\subsection{Architecture Generalizability Across Foundation Models}
\label{subsec:llm_stability}
Finally, we test whether the reported gains are tied to one generator family or whether they persist across models with different pretraining and serving characteristics. For this study, we keep the retrieval pipeline fixed and vary only the backbone LLM used for answer synthesis. This isolates whether SEM-RAG mainly improves the evidence supplied to the generator, or whether it depends on model-specific prompt behavior. This experiment uses TeleQnA only, because it offers the broadest question coverage and the most stable answer normalization across generators.

\begin{table}[ht]
\caption{Cross-LLM stability analysis on TeleQnA}
\label{tab:llm_stability}
\centering
\begin{tabular}{@{}lccc@{}}
\toprule
\textbf{Foundation LLM} & \textbf{\begin{tabular}[c]{@{}c@{}}Naive RAG\\ (Acc. \%)\end{tabular}} & \textbf{\begin{tabular}[c]{@{}c@{}}SEM-RAG\\ (Acc. \%)\end{tabular}} & \textbf{\begin{tabular}[c]{@{}c@{}}Absolute\\ gain\end{tabular}} \\ \midrule
GPT-5.2 (Closed) & 45.5 & 94.1 & \textbf{+48.6} \\
Claude 3.5 Sonnet (Closed) & 43.2 & 92.8 & \textbf{+49.6} \\
LLaMA-3-70B (Open) & 38.7 & 89.5 & \textbf{+50.8} \\
Mistral-Large (Open) & 40.1 & 90.2 & \textbf{+50.1} \\ \bottomrule
\end{tabular}
\end{table}

Table~\ref{tab:llm_stability} shows that baseline accuracy varies by model, as expected. However, the absolute gain provided by SEM-RAG stays close to \(50\) percentage points across all four backbones. The same retrieval design therefore benefits both closed and open model families, despite their differences in memorized telecom knowledge and instruction-following behavior.

This stability is important for the interpretation of the paper. If the gains came mainly from prompt sensitivity or hidden domain knowledge in one proprietary model, the improvement would fluctuate much more strongly across backbones. Instead, the uplift remains consistent, which supports the view that SEM-RAG improves the quality and structure of the retrieved evidence itself. At the same time, the final absolute scores are not identical across LLMs, which is expected: once the correct evidence is retrieved, the generator still has to synthesize the answer in a fluent and faithful way.

To sum up, the cross-model results reinforce the paper's main message. SEM-RAG should be understood primarily as a retrieval and indexing design for standards-heavy corpora, not as a prompt recipe tailored to one foundation model.

\section{Discussion and Limitations}
\label{sec:limitations}
The paper makes a focused claim, not a universal one. We do not claim formal guarantees on answer correctness, and the compiled graph can still inherit upstream parsing errors, especially from low-quality PDFs or heavily nested annex tables. The current compiler is designed for text, tables, and formulas; message sequence charts and other highly graphical artifacts are only weakly represented through surrounding text and provenance links. In addition, the complexity analysis in Section~\ref{sec:analysis} concerns indexing and retrieval budgets rather than semantic correctness. Finally, all experiments are conducted on telecom standards, so cross-domain transfer remains an empirical question rather than an established property.

\section{Conclusion}
\label{sec:conclusion}
SEM-RAG addresses a specific but important problem: how to retrieve from telecom standards without destroying table and formula semantics during document ingestion. The proposed system combines structure-preserving compilation with entropy-guided graph compression, then uses alignment and lightweight routing to make the resulting graph practical at query time. Across multiple telecom benchmarks, this design improves performance on table-heavy and formula-heavy questions, retains strong multi-hop reasoning ability, and substantially reduces indexing-time LLM usage relative to standard GraphRAG. In summary, we conclude that in this domain, preserving document structure is not a secondary preprocessing detail but a prerequisite for reliable retrieval. Future work will extend the compiler to sequence charts and richer engineering diagrams, and will study how the same design transfers to other standards-heavy technical domains.

\bibliographystyle{IEEEtran}
\bibliography{ref}

\end{document}